\documentclass{article}

\usepackage[main,final,nonatbib]{neurips_2026}

\usepackage[utf8]{inputenc} 
\usepackage[T1]{fontenc}    
\usepackage{hyperref}       
\usepackage{url}            
\usepackage{booktabs}       
\usepackage{amsfonts}       
\usepackage{nicefrac}       
\usepackage{microtype}      
\usepackage{xcolor}         

\usepackage{amsmath}
\usepackage{amsthm}
\usepackage{caption}
\usepackage{subcaption}
\usepackage{graphicx}
\usepackage{cleveref}
\usepackage[backend=biber,style=numeric]{biblatex}
\newtheorem{proposition}{Proposition}

\title{Voronoi-Markov chain and spatial entropy for point pattern analysis}

\author{%
  James~C.~Mathews~Jr.\\
  Department of Medical Physics\\
  Memorial Sloan Kettering Cancer Center
  \\
  New York NY USA\\
  \texttt{mathewj2@mskcc.org} \\
  \And
  Francisco~Couzo \\
  DigITs AI/ML Solutions\\
  Memorial Sloan Kettering Cancer Center\\
  New York, NY, USA\\
  \And
  Aleksandr~Petrov\\
  DigITs AI/ML Solutions\\
  Memorial Sloan Kettering Cancer Center\\
  New York, NY, USA\\
  \And
  Avijit~Chatterjee \\
  DigITs AI/ML Solutions\\
  Memorial Sloan Kettering Cancer Center\\
  New York, NY, USA\\
  \And
  Saad~Nadeem\\
  Department of Medical Physics\\
  Memorial Sloan Kettering Cancer Center\\
  New York, NY USA\\
  \texttt{nadeems@mskcc.org} \\
}

\begin{document}

\maketitle

\begin{abstract}
In this paper, we revisit the concept of entropy in the spatial context with the aim of deriving computable and interpretable metrics for point pattern analysis in domains such as histopathology. We discuss stochastic point process assumptions like Poisson homogeneity and Simple Sequential Inhibition (SSI) and review established results for Voronoi and Delaunay tilings of point sets for their potential to provide null distributions for rigorous empirical hypothesis testing. We present (1) a novel ``dense basin entropy'' defined in terms of the Ord distribution supported on Voronoi basins, which is shown to be sensitive to clustering, and (2) a related Markov chain designed as a simplification or approximation of Brownian motion in the underlying planar domain. We investigate bounds on the dense basin entropy in the SSI regime and in empirical data, and show that the entropy rate and spectral gap of the Markov chain lead to sharp discrimination metrics. Finally, we demonstrate a generalization to multiple-set analysis via aggregation of one set over the Markov invariant measure for another. Simulations and experiments in histopathology show that our proposed metrics offer insights complementary to other instantiations of entropy in spatial analysis. The code can be found on our SMProfiler GitHub: \url{https://github.com/nadeemlab/SMProfiler}.
\end{abstract}

\section{Introduction}
Quantifying and stratifying patterns in spatial point sets is a key component of data analysis in the empirical sciences, for example ecological field studies, geography, epidemiology, and a domain of particular interest in computer vision, histopathology (with segmented cells represented as point sets). Discriminating random from non-random patterns is especially important for discovery, but also difficult due to the open-ended nature of the problem.

We aim to apply the concept of entropy, from the theory of codes and information transmission, to such point sets. It is reasonable to expect a form of entropy to help discriminate between random point sets and those exhibiting systematic patterns. This approach has the advantage that it does not require much prior knowledge of the kind of pattern, making it suitable for exploratory data analysis in data modalities thought to carry a lot of information that is not yet well-structured. To this end, reviewing previous work on spatial generalizations of entropy, we find that it calls for principled/adapted discretizations and that the Voronoi/Thiesson tiling of a given point set, and its dual, Delaunay triangulation, can be used to frame this construction.

In cell microscopy, there are several established metrics used for point pattern analysis, including spatial autocorrelation (Moran's I and Geary's C \cite{pebesma2023spatial}), the Ripley statistics \cite{baddeley2016spatial}, and neighborhood enrichment scores \cite{palla2022squidpy}. Often an appropriate null or reference assumption is not explicit, or the distribution of the computed metrics under the null assumption is not established, and for this reason such metrics are usually limited to use within a supervised learning paradigm. In traditional statistics much effort is expended in the determination of rigorous null distributions, in order to make unsupervised learning successful.

We found that the literature on the Delaunay triangulation attributes under the Poisson point process assumption (homogeneity), going back to the 1970s, is quite comprehensive and offers a model example of a completely described null reference for empirical hypothesis testing. This enables unsupervised analysis for point patterns while making the case that the Poisson-Delaunay statistics should be used more widely for this reason.

As the spatial entropy and Delaunay statistics are defined in terms of aggregation of isolated local geometric structures, they do not assess interaction between elements or influence across multiple scales. To rectify this, we introduce the Voronoi-Markov chain as a generalization of our spatial entropy. We evaluate the various metrics in simulations intended to highlight the sensitivity to variations in point set density.

\section{Related Work}
The concept of entropy was applied extensively to spatial data analysis following the formalization and justification of entropy in coding theory \cite{shannon1948mathematical}. The applications to settlement pattern analysis \cite{medvedkov1967concept, medvedkov1967regular} are early examples from 1967. Initial applications were focused on regular grids, a limitation highlighted by several follow up works \cite{batty1972entropy}. Problems with entropy formulations based on space discretizations were frequently noted (e.g. \cite{jaynes1957information,batty1974spatial}), and differential entropy defined for continuous distributions in space has not enjoyed widespread acceptance and use, despite reasonable justification in \cite{shannon1948mathematical} and the eventual availability of a rigorous estimator of differential entropy from samples \cite{kozachenko1987sample}.

In seeking to generalize ordinary discrete entropy to the spatial context, one approach is to retain the discrete categorical variable and augment samples with a spatial point set component, for the context of categorized point sets. This avoids the issue of defining differential entropy and leads to modifications of the normal formula in which terms or factors are added to account for inter-categorical influence evinced by spatial arrangement. A recent review article \cite{li2025entropy} shows that the variety of such entropies is rich with untapped potential for applications, although a theory that clearly organizes them mathematically is not yet available. The main limitation of this approach is that it becomes trivial when the number of categories is small, so it is not informative for ordinary point pattern analysis (for one set) or multiple-set analysis with 2 or 3 sets.

By contrast, the original approach in the earlier applications was to \emph{replace} the categorical variable with a location variable. The issue of differential entropy was again circumvented, by working directly with a discretized location. The first use of the Voronoi/Thiessen tilings for precisely this purpose was reported in 1970 by Chapman in an application to geography \cite{chapman1970application} (see also \cite[p99]{upton1985spatial}), using what we call the ``uniform basin entropy''.

The Voronoi polygons of a point set $\{v_i\}$ are the regions consisting of points closest to one of the $v_i$ rather than to any other. Due to its naturalness and non-triviality, this construction has a rich history of rediscovery, and the reader is referred to \cite{okabe1999spatial} for details regarding efficient algorithms, generalizations, and analytical results, with comprehensive references. The dual, Delaunay tiling, being a triangulation, is more amenable to precise analysis. The independent findings of Collins \cite{collins1968geometrical} and Miles \cite{miles1970homogeneous} initiated a tradition of establishing rigorous analytical properties of random Delaunay triangles and eventually also Voronoi polygons (see \cite{okabe1999spatial} and \cite{moller1998review}). These properties include precise formulas for the distribution of the angles of the Delaunay triangles, and several authors make the case for the use of these and other distributions for empirical hypothesis testing of specific point sets.

Following the works \cite{vincent1982theoretical,boots1986using}, few studies have carried out this prescription in specific spatial data analysis applications, \textit{partly because the related body of literature narrowly predates the onset of the era of cheap computing in the 1990s and 2000s}. By the time computational resources for analysis became readily available in parallel with large-scale data collections e.g. in the form of high-resolution rasterized imaging (histopathology/satellite imagery), the field had already started transitioning towards more brute-force data mining techniques and the antecedents of machine learning.

Against this backdrop, there is now renewed interest in theoretically justified techniques in the form of metrics that can be practically computed and meaningfully interpreted, e.g. the Squidpy Python library for computing spatial statistics in spatial -omics \cite{palla2022squidpy}. The Delaunay statistics seem to have been overlooked in recent research for this purpose, despite being well suited to it. A notable exception to the trend is the 2024 theoretical study of Edelsbrunner et al. \cite{edelsbrunner2024angles} which generalizes the classical Delaunay statistics to Brillouin tessellations of order higher than 1.

In another direction, the idea to view a Voronoi tiling as a frame for a Markov process to simplify a continuous model also appears in \cite{vanden2009markovian,jagger2020predicting} in the context of molecular dynamics and rigorous estimation of chemical kinetics rates. These models are sophisticated and they make use of a Voronoi tiling not for an observed point set as in pattern analysis, but for a subset tuned in order to achieve optimal characteristics for stochastic modeling.

\section{Methods}

\subsection{Entropy Formalism}

\paragraph{Entropy of a Discrete Distribution.}
Ordinary entropy is defined for a finite set of probabilities $p_i$, $i=1, ..., N$, with $\sum p_i = 1$, as $H:=\sum p_i (- \operatorname{log} p_i )$. We default to the use of base $e$ logarithms throughout, without much loss of generality, but base $2$ is also commonly used. Setting $F(x):=x(-\operatorname{log} x)$, we find that $H=\sum F(p_i)$ is a sum of terms belonging to $[0,e^{-1}]$, as $F$ is a concave positive function on $[0,1]$, differentiable to all orders on $(0,1]$, vanishing at $0$ and $1$, and with maximum value at $(e^{-1}, e^{-1})$. For concreteness, some basic properties are listed below.

\begin{proposition} Fix the positive integer $N$ and consider $H$ as a function on the simplex $\Delta^{N-1}=\{(p_i) \vert \thinspace \sum p_i = 1, \thinspace 0\leq p_i\leq 1 \}$.
  \begin{enumerate}
  \setlength\itemsep{-0.3em}
    \item{ \label{bounds} $H\in [0, \operatorname{log} N]$. }
    \item{ \label{attains} $H$ attains the maximum value, $\operatorname{log} N$, for the uniform distribution $p_i=1/N$.}
    \item{ \label{stationarypoint} The uniform distribution is the only stationary point for $H$ on the interior of $\Delta^{N-1}$.}
    \item{ \label{midpoints} $H$ takes the value $\operatorname{log} (d+1)$ at the midpoint/centroid of each $d$-dimensional facet $\Delta^{d}$ of $\Delta^{N-1}$.}
  \end{enumerate}
\end{proposition}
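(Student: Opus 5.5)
The plan is to derive all four items from the concavity of $F(x)=x(-\operatorname{log} x)$ and one Lagrange multiplier computation, using the convention $F(0)=0$ (so $F$ is continuous on $[0,1]$).

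\textbf{Item \ref{bounds}, lower bound, and item \ref{attains}.} Each term satisfies $F(p_i)\geq 0$ on $[0,1]$, so $H\geq 0$. For the upper bound, apply Jensen's inequality to the concave $F$ with equal weights $1/N$:
\[
\frac{1}{N}\sum_i F(p_i)\leq F\Bigl(\frac{1}{N}\sum_i p_i\Bigr)=F(1/N)=\frac{\operatorname{log} N}{N}.
\]
Multiplying by $N$ gives $H\leq \operatorname{log} N$. Direct substitution of $p_i=1/N$ gives $H=N\cdot\frac{1}{N}\operatorname{log} N=\operatorname{log} N$, which proves item \ref{attains}.

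\textbf{Item \ref{stationarypoint}.} On the interior all $p_i>0$, so $H$ is smooth there. Stationarity of $H$ restricted to the affine hyperplane $\sum p_i=1$ means that $\nabla H$ is parallel to $(1,\dots,1)$. Since $\partial H/\partial p_i=-\operatorname{log} p_i-1$, this forces $-\operatorname{log} p_i-1=\lambda$ for all $i$, i.e.\ all $p_i$ are equal. The constraint then gives $p_i=1/N$. Conversely, the uniform point satisfies this condition, so it is the unique stationary point.

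\textbf{Item \ref{midpoints}.} A $d$-dimensional facet consists of the points supported on some $d+1$ coordinates. Its centroid has those $d+1$ coordinates equal to $1/(d+1)$ and the others equal to $0$. The zero coordinates contribute $F(0)=0$, so
\[
H=(d+1)F\bigl(1/(d+1)\bigr)=\operatorname{log}(d+1).
\]

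\textbf{Main obstacle.} The only delicate points are bookkeeping at the boundary: justifying the convention $F(0)=0$ by continuity ($x\operatorname{log} x\to 0$), and noting that Jensen's inequality and the value computation in item \ref{midpoints} remain valid when some $p_i=0$. Item \ref{stationarypoint} is stated only on the interior, where no boundary issue arises.
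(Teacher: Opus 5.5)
Your proof is correct. On items 3 and 4 it matches the paper's argument. In item 3 the Lagrange condition forces all $F'(p_i)=-(1+\log p_i)$ to be equal, and injectivity of $\log$ then gives $p_i=1/N$. In item 4 the paper's remark about the consistency of $H$ as some $p_i\to 0$ is exactly your convention $F(0)=0$, which reduces the facet centroid to the uniform distribution on $d+1$ symbols.

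The only real difference is in items 1 and 2. There the paper gives no argument and simply cites standard references. You instead apply equal-weight Jensen to the concave $F$, giving $H\le N\,F(1/N)=\log N$, with nonnegativity of each term supplying the lower bound. This makes the proposition self-contained, using only the concavity of $F$ that the paper already records.

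Since $F''(x)=-1/x<0$, $F$ is strictly concave on $[0,1]$. The same Jensen step therefore also shows that the uniform distribution is the unique maximizer, which complements the uniqueness statement in item 3.
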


\noindent
\emph{Proof}. (\ref{bounds}), (\ref{attains}): See \cite[Theorem 1.4 p11]{polyanskiy2025information}, \cite[p3]{csiszar2011information}, \cite[p11]{shannon1948mathematical}. (\ref{stationarypoint}): The method of Lagrange multipliers for constraint $1-\sum p_i=0$ shows that at a stationary point for $H$, all $F'(p_i)$ are equal. $F'(x)=-(1+\operatorname{log} x)$ is monotonic bijective on $(0,1)$, so all $p_i$ must be equal. (\ref{midpoints}): This follows from the consistency of the expression of $H$ for given $N$ with the expression for smaller $N$ values as some $p_i$ decrease to zero. \qedsymbol{}

\paragraph{Justification of Discrete Entropy.}

The significance of $H$ is due in large part to its interpretation, in the context of long sequences of characters drawn from a set of $N$ symbols occurring with frequencies consistent with $p_i$, as the average number of information units (i.e. bits in the base 2 case) needed per character to represent the sequences with a variable length code (\cite[Theorem 1.1 p3]{csiszar2011information}, \cite[Theorem 3 p13]{shannon1948mathematical}). Thus a message stream exhibiting the distribution $p_i$ may be said to produce information at the rate $H$ bits per character.

Ellerman recently introduced a formulation of information theory emphasizing the distinctions made by a partition rather than the probability of membership in a set \cite{ellerman2021new}. One of this theory's contributions is the \emph{logical entropy}, which is defined similarly to the formula for $H$ above but with the function $F$ replaced by $F_l(x)=x(1-x)$. Ellerman shows that the logical entropy has a number of favorable properties and in what follows we will attempt to consider both Shannon and logical entropy whenever entropy is invoked.

\paragraph{The Issue of Continuous/Differential Entropy and Discretization.}
For any application to a spatial context, it is natural to attempt to promote the discrete domain to a location parameter, which therefore has a continuous character, whether in its full three-dimensionality or in simplifications to one or two dimensions. We focus on two dimensions, for intended applications to histopathology.

For a continuous probability distribution given by a density $p(x,y)dx\thinspace dy$ on a domain $D\subset \mathbb{R}^{2}$, the \emph{differential entropy} can be defined with respect to a reference density $q(x,y)dx\thinspace dy$:
\begin{align}
 H_{DE} := \int_{D} p(x, y)\left(-\operatorname{log}(p(x,y)/q(x,y))\right)dx\thinspace dy
\end{align}
In the literature there are claims that differential entropy, as traditionally defined in \cite{shannon1948mathematical} with $q(x,y)$ constant, is an inadequate generalization of ordinary discrete entropy. For example Jaynes \cite[p201]{jaynes1957information} claims that Shannon's differential entropy was defined ``without calculating''. But from \cite[p37]{shannon1948mathematical} it seems that Shannon was well aware of the effect of coordinate transformation and the implicit dependence on a reference volume form. Although it is not as rigorous as \cite[Theorem 3]{shannon1948mathematical}), in \cite[p38]{shannon1948mathematical} we find a statement analogous to Theorem 3 which if true substantially justifies this concept of differential entropy. Jaynes' proposed variant is essentially the same except for the explicit attention on the reference measure. The real criticism seems directed at the failure of $H_{DE}$ to be expressible in a natural way as a limit of ordinary discrete entropies (for one example of such criticism, see \cite[p4]{batty1974spatial}).

We suggest that this is simply a fact about $H$ and $H_{DE}$ and not a defect. In fact, we can use this ``defect'' to our advantage by considering finite expressions of $H$ which are putative ``approximations" of $H_{DE}$ as recording non-trivial features of the approximation scheme or triangular mesh rather than merely approaching $H_{DE}$ in some limit. That is, $H=H(m, p)$ is a function of the mesh $m$, which can be used to study $m$ itself.

To clarify this situation, contrast it with integration. There are a wide variety of discretization schemes $R=R(m_{\varepsilon})$ for Riemann sum approximations of integrals that all result in the same value as the mesh size parameter $\varepsilon$ limits to 0, the integral. Consequently Riemann sums are not used to distinguish or study partitions of the interval $[0,1]$, for example. But the various partitions of the uniform interval $[0,1]$ into $N$ parts results in every possible finite distribution $(p_1, ..., p_N)$, and hence results in every possible entropy value, and so we do not expect ``independence of mesh'' for entropy, and differential entropy should be conceptually differentiated from finite entropy expressions.

In \cref{sec:pointprocess}, we argue that a continuous distribution assumption is often incorrect for point pattern analysis, thus we will not need recourse to continuous distributions.

\paragraph{Basins, Voronoi/Thiessen Polygons, Delaunay Triangles.}
In the direction of attending to the mesh itself, we consider the Voronoi polygon division associated to a given point set $\{v_i\in \mathbb{R}^{2}\}$ in the plane. Invoking a dynamical interpretation, we call these \emph{basins}, as in basins of attraction to a central point. For the computation of the Voronoi tiling and Delaunay triangulation we use the {\tt scipy} Python library \cite{2020SciPy-NMeth} as an interface to QHull \cite{barber1996quickhull}.
 
\paragraph{Uniform Basin and Dense Basin Distributions and Their Entropy.}
In our initial investigation, we sought to use the basin tiling as a density estimator for a hypothetical continuous distribution from which sample sets are drawn in applications like digital histopathology of biological cell sets. The goal was to compute the differential entropy from samples in a manner distinct from Kozachenko and Leonenko's graph-based approach \cite{kozachenko1987sample} by relying more on local geometry. This would be in line with the Voronoi density estimators presented in \cite{polianskii2022voronoi} and \cite{marchetti2023efficient}. 

Our previous remark about entropy of partitions of the uniform interval $[0,1]$ suggests that a simpler construction may be worthwhile: a Voronoi-cell-based discrete approximation of the \emph{uniform} area distribution in the plane. This discrete distribution we call the \emph{uniform basin distribution}: $p^{\text{ub}}_i = A_i / A$, where $A_i$ is the area of basin $i$ and $A=\sum_{i}A_i$ (all inifinite regions omitted). This distribution approximates the symbol distribution of the discrete process which is induced by uniform Brownian motion (also known as the Wiener stochastic process) in the relevant planar domain by recording basin membership transitions in trajectories or realizations of the process. Despite the phenomenon we observed above in the the analogous 1-dimensional case, we found in preliminary simulations that $H_{\text{ub}}$ (the entropy of $p^{\text{ub}}$) is not a strong discriminator of non-randomness in planar point patterns. After all, the uniform distribution itself is not carrying any information about the point set, so the fact that $H_{\text{ub}}$ is not an effective discriminator is in a sense the expected result.

We found the \emph{dense basin entropy} $H_{\text{db}}$ to be more informative. It is defined as the entropy of the \emph{dense basin distribution}: $p^{\text{db}}_i = (1/A_i)/Z$, where $Z=\sum_i 1/A_i$. This distribution, consisting of ``dense basins'', with 1 point per basin but spread out over its area, can be regarded as a discretization of a hypothetical continuous density of which the given point set is a sample. This is the same as the distribution used by Ord \cite{ordtree} to estimate the Poisson process density $\lambda$.

\subsection{Random Point Processes and Testing for Patterns}
\paragraph{Poisson Process and Alternative Null References.}
\label{sec:pointprocess}
The Poisson point process is the formal model in stochastic process theory which describes point sets in which each point is drawn independently of the others with respect to the standard uniform measure, idealized over the whole infinite plane. The Poisson process and finite area variants are often the implicit default hypothetical reference model for point pattern analysis.

However, there are a number of problems with applying this concept to real empirical data. One issue is ``the validity and meaning of treating an observed population as though it is a sample from a hypothetical population'', as mentioned in prior work \cite{summerfield1983populations}. This is indeed problematic for cell microscopy and numerous other application domains, where there is often no concept of drawing ever more samples in a given study area because the population is finite and the study is already exhaustive; to this end, Upton and Fingleton aptly remark that ``Great imagination has gone into turning what appears to be a population into a sample'' \cite[p325]{upton1985spatial}. Indeed the reason for finiteness in such cases is often because the points are simplified, extent-free idealizations of the location of objects which in reality possess a definite extent. This causes the point sets to exhibit systematic location-based exclusion or repulsion that is inconsistent with the Poisson process model. Several prior studies have made similar observations (e.g. ``minimum interpoint distance'' constraint in \cite[p405]{okabe1999spatial}, the ``hard discs model'' of \cite[p182]{vincent1982theoretical}). The model known as Simple Sequential Inhibition (SSI) assumes a minimum interpoint distance and fixed number of points.

Intuitively, when point set sizes and expected object diameters are small enough relative to study area, alternative null reference models with either finiteness constraints or exhibiting local-exclusion dependence may be expected to approximate closely to the Poisson regime, and this is confirmed in simulations by Vincent and Haworth \cite[184]{vincent1982theoretical}. Nevertheless investigators should take steps to mitigate problems of interpretation in cases (not at all rare) where the Poisson regime is not in effect. The classical results for Delaunay triangle statistics under the Poisson assumption are reviewed in \Cref{delaunaynull}.

\paragraph{Lower bounds on $H_{\text{db}}$.}
A clear lower bound for the dense basin entropy was observed in experiments evaluating this entropy on cell sets from histopathology datasets. Knowing that such point sets exhibit local exclusion as in the SSI assumption, with minimum interpoint distance owing to known sizes for biological cells, we hypothesized that this entropy is bounded below in terms of the minimum distance and the number of points. We proved a lower bound which appears to be relatively weak. We also found that an alternative heuristic largely exhibits the behavior observed in empirical experiments.

Let $r>0$ and set $a:=\pi r^2$, the area of a disc with radius $r$. Let $V_a$ be the set of configurations $\{v_i\}$ in the plane with $N$ finite Voronoi cells such that $|v_i- v_j| > 2r$ for all $i,j$. $A_i$ is the area of Voronoi cell containing $v_i$, so we have $A_i > a$. Consider $A:=\sum_{i=1}^{i=N}A_i$ as a function on $V_a$. Let $p_i:=(1/A_i)/Z$ be the Ord distribution. Recall that $F(x):=x\cdot (- \operatorname{log}x)$.

\begin{proposition}\label{basicineq} Assume $N>2$. Let $c:=\left(\frac{a}{A}\right)^{N}(N-1)!\thinspace\thinspace$.
  \vspace{-0.5pc}
  \begin{enumerate}
    \setlength\itemsep{-0.3em}
    \item{$p_i > c$ for all $i$.}
    \item{On $V_a$ the dense basin entropy satisfies $H_{\operatorname{db}}\geq N\cdot F_{\operatorname{min}}$, where \[F_{\operatorname{min}}:=\operatorname{min}\left\{ F(c), F(1-(N-1)c) \right\}\]}
  \end{enumerate}
\end{proposition}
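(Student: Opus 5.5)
The plan is to bound each Ord probability from below using only the constraints $a < A_j < A$. Then I will pass to the entropy bound using concavity of $H$ on a truncated simplex.

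\emph{Part 1.} For fixed $i$ I would write
\[
p_i=\frac{1/A_i}{\sum_j 1/A_j}=\frac{1}{1+\sum_{j\neq i} A_i/A_j}.
\]
On $V_a$ every cell contains the open disc of radius $r$ about its site, so $A_j>a$. Also $A_i<A$, because $N>2$ and all cells have positive area. Hence each ratio satisfies $A_i/A_j < A/a =: x$, and so $p_i > 1/(1+(N-1)x)$.

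It then remains to check that $1/(1+(N-1)x)\geq c=(N-1)!\,x^{-N}$, i.e.\ that $x^N\geq (N-1)!\,(1+(N-1)x)$. Since $A=\sum A_j>Na$, we have $x>N\geq 1$. This gives $1+(N-1)x\leq Nx$, so the right-hand side is at most $N!\,x$. On the other side, $x^N=x^{N-1}x> N^{N-1}x\geq N!\,x$, using the elementary inequality $(N-1)!\leq N^{N-1}$. Chaining these gives $p_i>c$. The only real content is this comparison: the stated $c$ is far from sharp, and the intermediate bound $1/(1+(N-1)A/a)$ is stronger.

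\emph{Part 2.} By Part 1, the Ord distribution lies in the polytope
\[
P_c=\{p\in\Delta^{N-1} : p_i\geq c \text{ for all } i\}.
\]
This polytope is nonempty, since $Nc<\sum p_i=1$. It is a translated and scaled copy of the simplex, so its vertices are the $N$ points $w^{(k)}$ with $w^{(k)}_k=1-(N-1)c$ and $w^{(k)}_j=c$ for $j\neq k$.

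Since $H=\sum F(p_i)$ with $F$ concave, $H$ is concave on $P_c$. A concave function on a compact polytope attains its minimum at a vertex. Every vertex has the same value, $(N-1)F(c)+F(1-(N-1)c)$, and this is at least $N\,F_{\min}$ because each of the $N$ summands is at least $F_{\min}$. Therefore $H_{\operatorname{db}}\geq N\cdot F_{\min}$ on $V_a$.

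The step requiring most care is the vertex description of $P_c$ and the appeal to concave minimization. I would justify it by writing any $p\in P_c$ as $p=c\mathbf{1}+(1-Nc)q$ with $q\in\Delta^{N-1}$. Then $p$ is a convex combination of the $w^{(k)}$ with weights $q_k$, and Jensen's inequality for the concave $H$ gives $H(p)\geq\sum_k q_k H(w^{(k)})$, which finishes the argument.
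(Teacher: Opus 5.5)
Your argument is correct, but both parts take a different route from the paper.

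In Part 1 the paper writes $1/Z=P/Q$ with $P=\prod_i A_i>a^N$ and $Q=\sum_i\prod_{j\neq i}A_j$. It bounds $Q<A^{N-1}/(N-1)!$ because each monomial of $Q$ occurs $(N-1)!$ times in the multinomial expansion of $(A_1+\cdots+A_N)^{N-1}$. Together with $1/A_i>1/A$, this lands exactly on $c$ with no auxiliary comparison. Your ratio form $p_i=1/(1+\sum_{j\neq i}A_i/A_j)$ instead gives the much stronger intermediate bound $p_i>a/(a+(N-1)A)$, at the cost of an extra numerical step to come down to $c$. That step contains a small slip. The chain $x^{N-1}x>N^{N-1}x\geq N!\,x$ needs $N!\leq N^{N-1}$ (immediate from $N!=1\cdot 2\cdots N$), not $(N-1)!\leq N^{N-1}$, which is weaker by a factor of $N$. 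The conclusion is unaffected.

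In Part 2 the paper argues termwise: $c<p_i<1-(N-1)c$, so concavity of the one-variable $F$ on that interval gives $F(p_i)\geq F_{\min}$ for each $i$. This is shorter. Your barycentric decomposition with Jensen applied to $H$ is slightly longer but proves $H_{\operatorname{db}}\geq (N-1)F(c)+F(1-(N-1)c)$, which is the exact minimum of $H$ on $P_c$. That argument works for any lower bound $c'$ with $Nc'<1$. Feeding in $c'=a/(a+(N-1)A)$ from your Part 1 therefore yields a sharper entropy bound than the stated one, dramatically so in the realistic regime $A/a\gg N$.
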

\vspace{-0.75pc}
\emph{Proof}. (1)
\[ \frac{1}{Z} = \frac{1}{\sum 1/A_i} = \frac{\prod_i A_i}{\sum_i \prod_{j\neq i}A_j}=: \frac{P}{Q}\]
$P > a^{N}$ because $A_i > a$. $Q$ is a sum of terms in the expansion of $A^{N-1}=(A_1+\cdots + A_N)^{N-1}$. Each contributing term occurs $(N-1)!$ times due to reordering, so $Q < (1/(N-1)!) A^{N-1}$. Together with $1/A_i > 1/A$, we have
\[ p_i = (1/A_i)/Z = (1/A_i)\cdot P \cdot Q^{-1} > \frac{1}{A} \cdot a^{N} \cdot (N-1)! \frac{1}{A^{N-1}}=c \]

(2) Now since $p_i > c$ for all $i$, we also have $p_i = 1 - \sum_{j\neq i}p_j < 1 - (N-1)c$. $F(x)$ is a concave function and when restricted to $[c, 1-(N-1)c]$ its minimum is $F_{\text{min}}$ as defined above. Thus $H_{\text{db}}=\sum_{i=1}^{i=N}F(p_i)>N \cdot F_{\text{min}}$. \qedsymbol{}

The bound obtained by the simpler argument below is a close match to the empirical observation, although it is only a heuristic local bound, holding for the idealized dense basin distribution of greatest concentration at one point.

\begin{proposition}\label{idealHdb} Let $A_i>0$ for $i=1, ... , N$ satisfy $A_i \geq a > 0$. Let $A=\sum_i A_i$. Let $p_i= (1/A_i) / Z$, where $Z=\sum_i 1/A_i$. Under these conditions the value $\bar{p}$ of $p$ for which $p_1$ is as large as possible and $p_{i\neq 1}$ are all equal is:
  \vspace{-0.75pc}
  \[ \bar{p}_1 = (1/a)/Z \qquad \bar{p}_{i\neq 1}= \tfrac{N-1}{A-a}\cdot \tfrac{1}{Z} \qquad Z=\tfrac{1}{a} + \left(\tfrac{N-1}{A-a}\right)(N-1)  \]
  For this $\bar{p}$:
  \vspace{-0.25pc}
  \[ H(\bar{p}) = F(\bar{p}_1) + (N-1) F(\bar{p}_{i\neq 1})  \]
\end{proposition}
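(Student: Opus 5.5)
The plan is to treat the statement as a small constrained optimization problem. First I will show that the configuration maximizing $p_1$ automatically has all $p_{i\neq 1}$ equal. Then I will read off $\bar{p}$, and the entropy formula will follow by substitution into $H=\sum F(p_i)$.

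\emph{Reformulating $p_1$.} Write
\[ p_1 = \frac{1/A_1}{1/A_1 + S}, \qquad S:=\sum_{j\neq 1} \frac{1}{A_j}, \]
so that $p_1 = 1/(1 + A_1 S)$. Maximizing $p_1$ is therefore equivalent to minimizing the product $A_1 S$ subject to the constraints $A_i\geq a$ and $\sum_i A_i = A$.

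\emph{Fixing $A_1$ and applying Jensen.} Fix $A_1$, so that $\sum_{j\neq 1}A_j = A-A_1$. Since $x\mapsto 1/x$ is strictly convex on $(0,\infty)$, Jensen's inequality (equivalently AM--HM) gives
\[ S \geq \frac{(N-1)^2}{A-A_1}, \]
with equality exactly when all $A_j$, $j\neq 1$, equal $(A-A_1)/(N-1)$. This is what justifies the phrase ``$p_1$ as large as possible and $p_{i\neq1}$ all equal'': at any maximizer the remaining cells are forced to be equal, so the two requirements are compatible rather than an extra imposed condition.

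\emph{Optimizing over $A_1$.} The resulting bound
\[ A_1 S \geq \frac{(N-1)^2 A_1}{A-A_1} \]
is strictly increasing in $A_1$ on $[a, A)$. The minimum is therefore at the boundary $A_1=a$, where each remaining cell has area $(A-a)/(N-1)$. This point satisfies the constraint $A_j\geq a$ precisely when $A\geq Na$. That condition is implicit in the hypotheses, since $A=\sum A_i\geq Na$.

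\emph{Substitution.} Plugging in $A_1=a$ and $A_j=(A-a)/(N-1)$ gives
\[ Z = \frac{1}{a} + (N-1)\cdot\frac{N-1}{A-a}, \]
with $\bar{p}_1 = (1/a)/Z$ and $\bar{p}_{i\neq1} = \frac{N-1}{A-a}\cdot\frac{1}{Z}$, as stated. The entropy $H(\bar p)=F(\bar p_1)+(N-1)F(\bar p_{i\neq 1})$ is then just the definition $H=\sum F(p_i)$ evaluated at this point.

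\emph{Main subtlety.} The only real subtlety is the interpretation step: showing that ``largest $p_1$'' and ``equal others'' are achieved simultaneously. The Jensen equality case handles this. Checking feasibility ($A\geq Na$) ensures the boundary point $A_1=a$ is admissible.
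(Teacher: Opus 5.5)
Your proof is correct, and it does noticeably more than the paper's. The paper's argument is two lines. It sets $A_1=a$, which it says is ``directly due to $A_1\geq a$''. It then gets $\bar p_{i\neq 1}$ from normalization with the remaining area split equally. In effect it reads the statement as the definition of one particular configuration and computes it.

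It does not address that $Z$ also depends on $A_1$. So it needs an argument that shrinking $A_1$ to $a$ actually maximizes $p_1$, and it gives none. Nor does it check whether requiring the $p_{i\neq 1}$ to be equal is compatible with maximality.

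Your steps supply exactly these missing pieces:
\begin{itemize}
\item rewriting $p_1=1/(1+A_1S)$;
\item the AM--HM bound $S\geq (N-1)^2/(A-A_1)$ with its equality case;
\item the monotonicity of $A_1/(A-A_1)$;
\item the feasibility check $A\geq Na$.
\end{itemize}
Together they show that $\bar p$ is the unique maximizer of $p_1$ over the whole constraint set $\{A_i\geq a,\ \sum_i A_i=A\}$. Equality of the other entries is forced there rather than imposed.

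A by-product, by symmetry, is a genuine bound $\max_i p_i\leq \bar p_1$ for every admissible configuration. The paper's computation does not establish this.

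The paper's route is shorter but asks the reader to accept that $A_1=a$ is optimal. Yours actually justifies the phrase ``as large as possible''.
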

\emph{Proof}. The formula for $\bar{p}_1$ is directly due to $A_1 \geq a$. Then $\sum_i \bar{p}_i=1$ implies the formula for $\bar{p}_{i\neq 1}$. \qedsymbol{}

Note that Proposition \ref{idealHdb} does not claim that $\bar{p}$ is achievable as a dense basin distribution for an actual Voronoi tiling, and it also does not claim that realized values of $H_{\text{db}}$ must satisfy $H_{\text{db}} > H(\bar{p})$, even though this was often observed.

\subsection{Voronoi-Markov Process}

\paragraph{Transition Probabilities, Entropy Rate, and Spectral Gap.}
The dense basin entropy as well as the Delaunay statistics have the deficiency that they only capture spatial information by means of the attribute distribution for the individual basin polygons. In principle the same distribution could be possible for very different arrangements. In particular, neighbor connectivity of basins is not used.

We therefore propose a finite Markov chain adapted to the basin structure making direct use of neighbor connectivity. By modeling this chain loosely on a Brownian motion-like stochastic process on the underlying plane, it also has the advantage that a successful analysis with this model may be interpretable in the mechanistic/dynamical terms of the given application domain, e.g. objects traversing the domain visiting the basins.

For the set of states we use the given planar point set, $\{v_i\}$. For fixed $i$, the transition probabilities $p_{ij}$ to the various $j$ are chosen to be proportional to the area ``near $i$ but between $i$ and $j$'' when the basins of $i$ and $j$ share an edge (known as a ridge in {\tt scipy}) of length $r$. We take this area to be the triangle spanned by $v_i$ and the segment of length $r$. If $u$ is the distance $|v_i - v_j|$, then this triangle has area $r u / 4$, because the ridge is perpendicular to the segment joining $v_i$ and $v_j$ (see Figure \ref{fig:vm}a). Finally, as an optional regularization, we choose a constant reference area $a$, considered perhaps as a disc surrounding $v_i$, and allow transition probability from $i$ to $i$ proportional to $a$. In our simulations we suppress hyperparameters by setting $a\approx 0$.

By proportional we just mean that the rows $p_{ij}$ are normalized to have sum 1, making it a right-stochastic matrix. That is, more explicitly we define the Voronoi-Markov chain by:
\begin{align}
p_{ij}^{VM} = \begin{cases}
  \frac{1}{4} ru/z_{i} & v_i, v_j \text{ are neighbors, } |v_i - v_j|=u, \text{ boundary segment } r \\
  a/z_{i} & i=j
\end{cases}
\end{align}
where $z_i=a + \sum_{\text{neighbors }j}\tfrac{ru}{4}$. 
For a typical basin $i$, not adjacent to infinite cells, all of the area $A_i$ within the basin will appear in the sum, which is $z_i=A_i + a$ in this case. Thus along the diagonal, $p_{ii}=a/(A_i + a)$ is not constant (as it might appear from our pre-normalized definition invoking $a$) but instead is very similar to the dense basin distribution.

This choice of Markov process can be understood as a finite-support process approximately induced by planar Brownian motion, or the Wiener process. Brownian motion is associated with the uniform distribution (the result of diffusion), and the idea is that the frequency of boundary-crossing events should be proportional to the mass of particles which may reach the given boundary segment. 

In simulations we found that the invariant measure $\mu$ of $p^{VM}$ tends to behave as the opposite of the point set density (see \Cref{fig:vmoffset}), so $1/\mu_i$ plays the role of the point density. We also considered a more complicated variant of $p^{VM}$ in which transition probability is increased up-gradient for the point density, in order to achieve consistency between the invariant measure and the desired density. But the simpler model defined above seems adequate, provided that it is suitably interpreted.

Note that the passage from the Wiener process to a discrete process by associating trajectories to their nearest neighbor basin centers would not result in a process with the Markov property, i.e. memorylessness, as trajectories meeting a given polygonal edge are nearer to some edges than to others in the new basin. That is, the probability of transit to another basin does not depend only on being located within a given basin, but also on where it entered. Therefore the Voronoi-Markov chain can only be an approximation of such an induced process.

There is associated to any finite Markov chain its \emph{entropy rate} (see \cite[p77]{cover1999elements}), as well as its \emph{spectral gap}. The spectral gap is the difference between the first two eigenvalues, ordered by modulus. Since 1 is always an eigenvalue (for the invariant measure), and indeed 1 is the largest-modulus eigenvalue, the spectral gap is (the absolute value of one minus) the next-largest eigenvalue $\lambda_2$. $\lambda_2$ is an important characteristic that can be shown to control the rate of convergence of trajectories' distribution to the invariant measure. The spectral gap is important in a number of mathematical and physical fields.

\paragraph{Invariant Measure For Two-Set Comparison.}
The Voronoi-Markov construction provides a mechanism for comparison of two point sets $\{v_i^1\}$ and $\{v_j^{2}\}$, which is useful for point set pair pattern analysis. The Voronoi-Markov construction is made for the $v_i^1$, and the points $v_j^2$ are assigned to the $v^1$ basins in which they are contained, resulting in a frequency distribution $\nu_i$. If $\mu_i$ denotes the Markov chain invariant measure for points $i$ of the first set, define the \emph{VM concordance}: $1/\sum_i \mu_i \nu_i$. This measures the degree to which the points of the second set are present along the highest-probability areas for trajectories of the Markov chain, and so it may be regarded as an assessment of whether or not the points of the second set might be a snapshot of dynamics that are a match for the predictions of the Markov chain. Following our remark about the $p^{VM}$ invariant measure, we also assess the \emph{inverted VM concordance}: $\sum_i \nu_i / \mu_i$.

\paragraph{3-point model.}
The simplest non-trivial case of the Voronoi-Markov chain has 3 states and 3 basin generating points, provided we make a minor modification to the construction. This case will never be directly useful for applications, but it may provide hints about the mathematical character of the general case, and it has the virtue of tractability (see \Cref{threepoint}).

\section{Results}

\paragraph{Concentrated Disc Model Simulation.} To demonstrate the ability of the dense basin entropy, Voronoi-Markov quantities, and the Delaunay shape statistics to discern patterns in random point sets, we use a \emph{concentrated disc model}. In this model the standard uniform distribution on the unit disc in the plane is altered by the radial transformation $r \mapsto r \cdot r^{q}$, where $r$ is the distance from the origin and $q>0$ is a concentration parameter. When $q=0$ no alteration is made, and larger $q$ increases the density in the vicinity of the origin. A number of random samples of $N$ points are drawn for various $N$. Examples of the point sets used are shown in \Cref{fig:concentrateddisccells}. The average value of each metric over 100 repetitions of each case is plotted in \Cref{fig:concentrateddiscplots}. 

The normalized dense basin entropy $H_{\text{db}}/\log_2 N$ detects concentration well, roughly halving between $q=0$ and $q=2$, with greater efficacy for larger $N$. The (scaled) Voronoi-Markov spectral gap exhibits a sharp sigmoid-like cutoff at a non-trivial concentration around $q=1$, especially for larger $N$. Its entropy rate shows a more stable, linear response to $q$. The Delaunay (general) angles and the Delaunay minimum angles tested in distribution against the null distribution shows a clear trend towards significant $p$-values with increasing $q$, albeit with apparent noise.

\paragraph{Concentrated Disc Model, Two-Set Comparison.} The Voronoi-Markov invariant measure is used for pattern alignment detection in a two-set model. Both point sets are drawn from the concentrated disc but with the second set offset to the right by an amount $d$. \Cref{fig:heatmaps} shows heatmaps of the VM concordance and inverted VM concordance metrics for various concentrations and offsets. For a given concentration, one hopes to detect small offsets as a signal of concordance between the point sets, with finer detection needed for higher concentrations.

Surprisingly both metrics are effective detectors, though in qualitatively different ways. They exhibit respectively high sensitivity versus high specificity. Without concentration, i.e. for $q=0$, the point sets are uniform and there is \emph{a priori} no basis for detection of the offset displacement between them, so these two metrics represent the two extremes for assignment in the ambiguous cases.

The VM concordance is broadly effective when $q>2$, predicting concordance with perhaps some false positives for relatively larger offsets. For $q<2$ the metric is uniformly high, so it is uninformative for low concentration scenarios. The inverted concordance metric has much sharper response, detecting concordance only for very small offsets when $q$ is high enough (also around $q=2$), with a very low chance of false positives.

\paragraph{Evaluating spatial entropies and other metrics on lung cancer and bone marrow cell set patterns.}\label{compexp} The proposed spatial entropies, the Voronoi-Markov-chain-derived quantities, and the Delaunay angle statistics were computed on cell subsets in the 536 lung cancer samples of Sorin et al. \cite{sorin2023single}, and the 30 bone marrow samples of Sarachakov et al. \cite{sarachakov2023spatial}. The empirical investigation of these non-trivial point patterns shows that each of the metrics offers independent information. Their distribution is plotted in \Cref{fig:featuregrid}.

Across 20 cases, in the 5 cases achieving $t$-test $p$-value less than $0.10$ the metrics appearing were the VM spectral gap $\sigma$, Delaunay $\alpha_{min}$ test $p$, and $H_{\text{db}}$. The same metrics appear with respect to achievement of AUC values greater than $0.65$. Top performers in the best cases were the spectral gap (AUC=$0.96$, $p=9\cdot 10^{-4}$), dense basin entropy (AUC=$0.95$, $p=0.05$), and Delaunay $\alpha_{min}$ (AUC=$0.93$, $p=3\cdot 10^{-3}$). The uniform basin entropy never achieved AUC more than $0.64$ or $p$ less than $0.29$.

\section{Limitations and Future Work}\label{limitationssection}
Entropy can be formulated in the spatial context and effectively applied to spatial point pattern detection and stratification by relying on the Voronoi discretization of the spatial domain that is adapted to observed point sets. A promotion of the construction to a Markov chain accounting for additional local geometry, in the basin adjacency structure, also yields effective discrimination metrics, especially the spectral gap. These measures compare favorably with classical Poisson-Delaunay statistics based on analytical null distributions.

However, we have argued in favor of null references, and the new metrics promoted here (the dense basin entropy, Voronoi-Markov entropy rate and spectral gap) would be much more useful in practical applications if their distribution were established precisely under the conditions of realistic null point process models. Further empirical studies are also needed to evaluate the real utility of the proposed metrics in detecting signal and stratifying patterns in a variety of datasets. There may also be viable alternative definitions of the Voronoi-Markov chain, especially if closer alignment with Brownian motion is desired. The computation of the VM spectral gap requires eigenanalysis for a square $N \times N$ matrix for $N$ points, but since this matrix is sparse, practical computational efficiency beyond $N\approx1000$ will be possible using an implementation that is effective in the sparse regime.

Preliminary investigation (see \Cref{eigensec}) of the second and third eigenfunctions for the Voronoi-Markov chain suggest that they are analogous to the principal components in Principal Component Analysis (PCA), or to the low-frequency modes in harmonic analysis. Indeed, this Markov chain is closely related to an edge-weighted Delaunay graph, and a possible relation with the graph Laplacian should be explored, especially if a construction can be found which shows that making use of the 2-dimensional cells may provide a strictly richer structure than the 1-dimensional graph. 

\begin{figure}[tbh]
  \centering
  \begin{subfigure}{0.39\linewidth}
    \includegraphics[width=6cm]{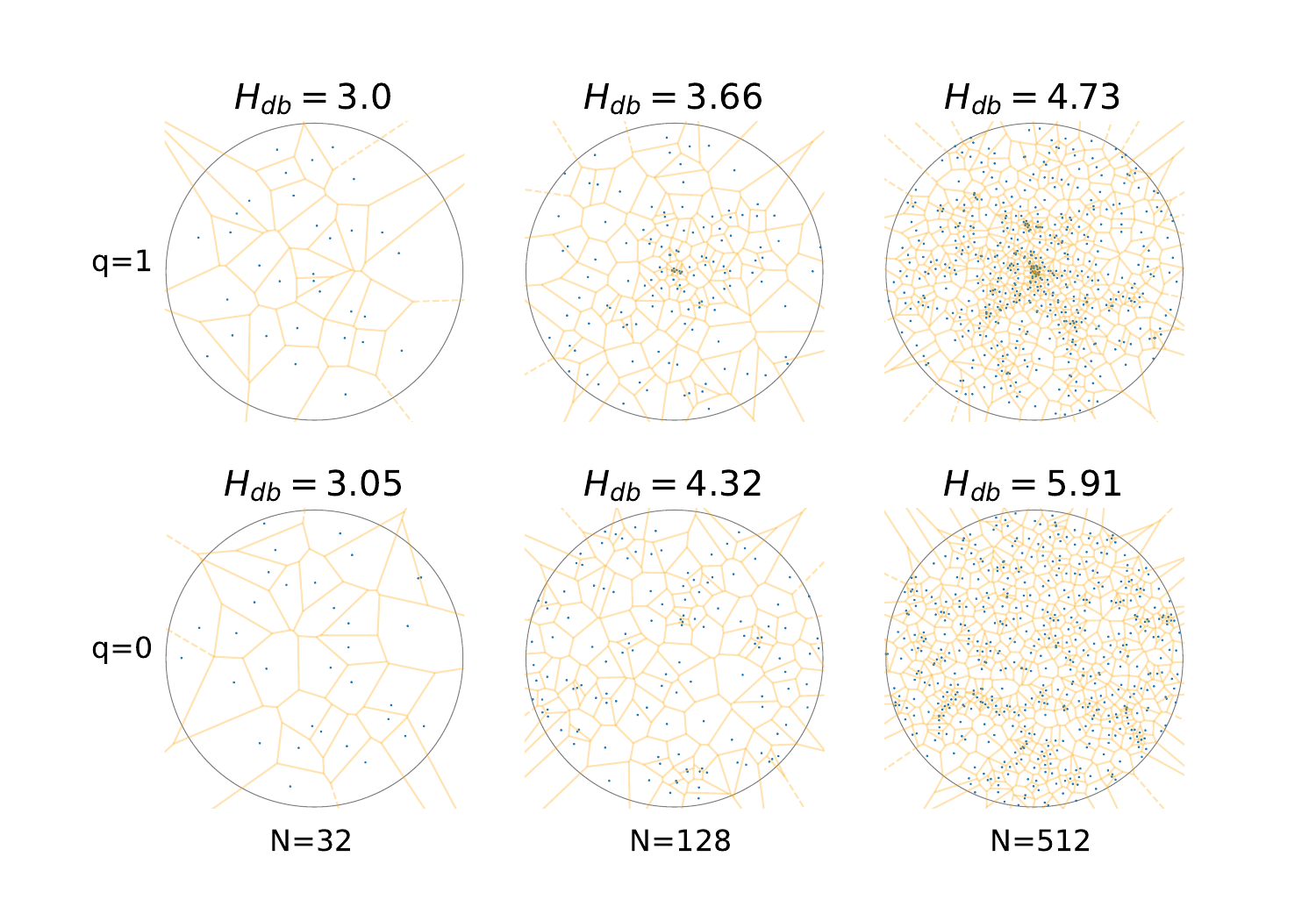}
    \caption{\label{fig:concentrateddisccells} Voronoi cell structures for $N$ points drawn from the concentrated disc, the unit disc with the uniform distribution transformed by $(x,y)\mapsto (x,y)\cdot r^{q}$, where $r=|(x,y)|$. Since $r^{q}\leq 1$ for $q>0$, density is concentrated near the origin as $q$ increases. The dense basin entropy $H_{db}$ is shown for each case.}
  \end{subfigure}\hspace{0.4pc}
  \begin{subfigure}{0.59\linewidth}
    \includegraphics[width=8.3cm]{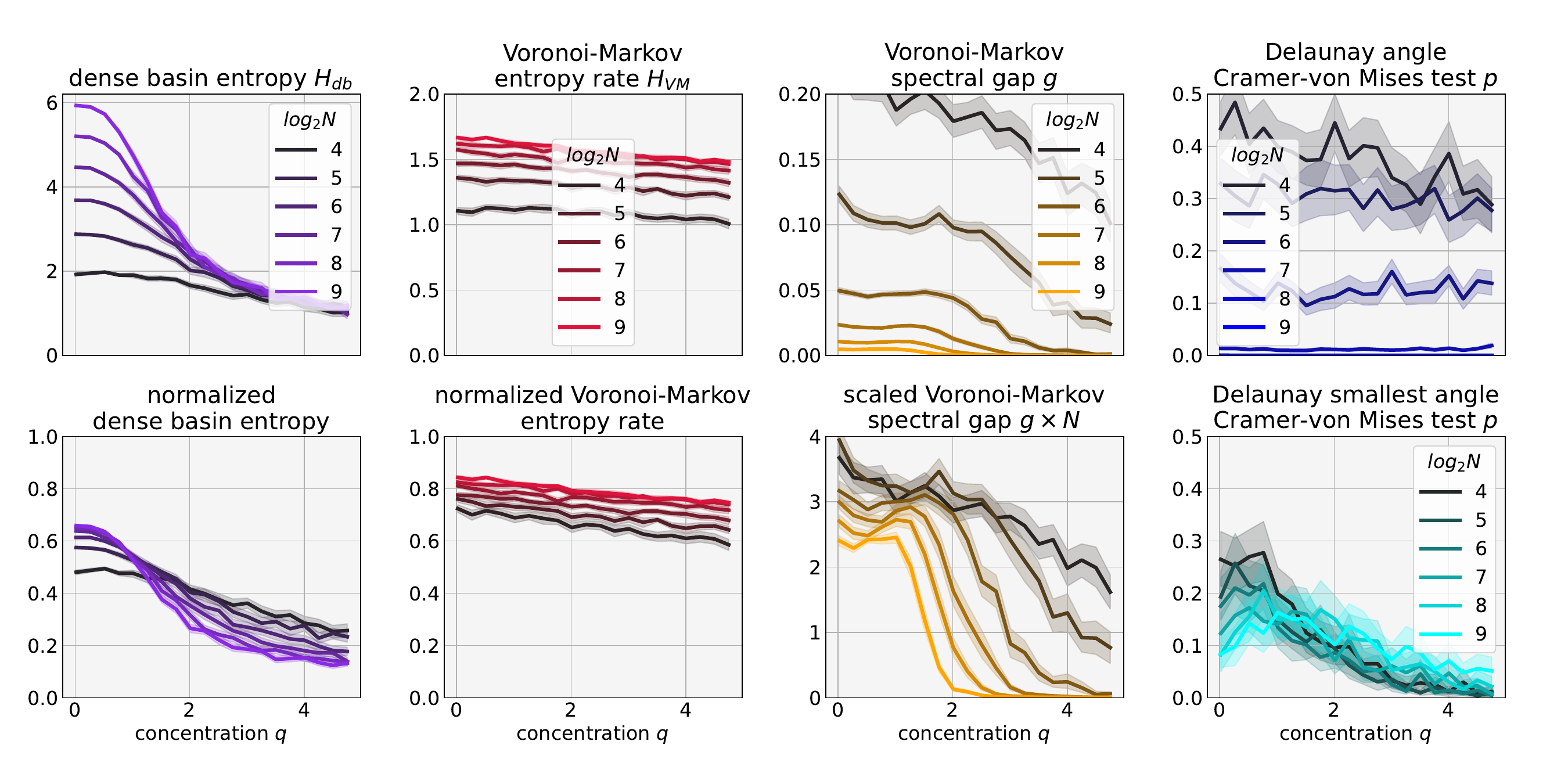}
    \caption{\label{fig:concentrateddiscplots} The dense basin entropy, Voronoi-Markov entropy rate and its spectral gap, with normalized variants, and Delaunay angle and minimum angle distribution goodness-of-fit test $p$-values, for points sets drawn from the concentrated disc. 100 point sets of size $N$ were drawn for each value of the concentration parameter $q$, and the plots show the average value of each metric.}
  \end{subfigure}
  \caption{\label{fig:concentrateddisc} The concentrated disc model for various degrees of concentration $q$, evaluated with the dense basin entropy, metrics derived from the Voronoi-Markov chain, and Delaunay triangulation angle statistics.}
\end{figure}

\begin{figure}[t!]
\centering
\begin{subfigure}{0.49\linewidth}
\includegraphics[width=7.4cm]{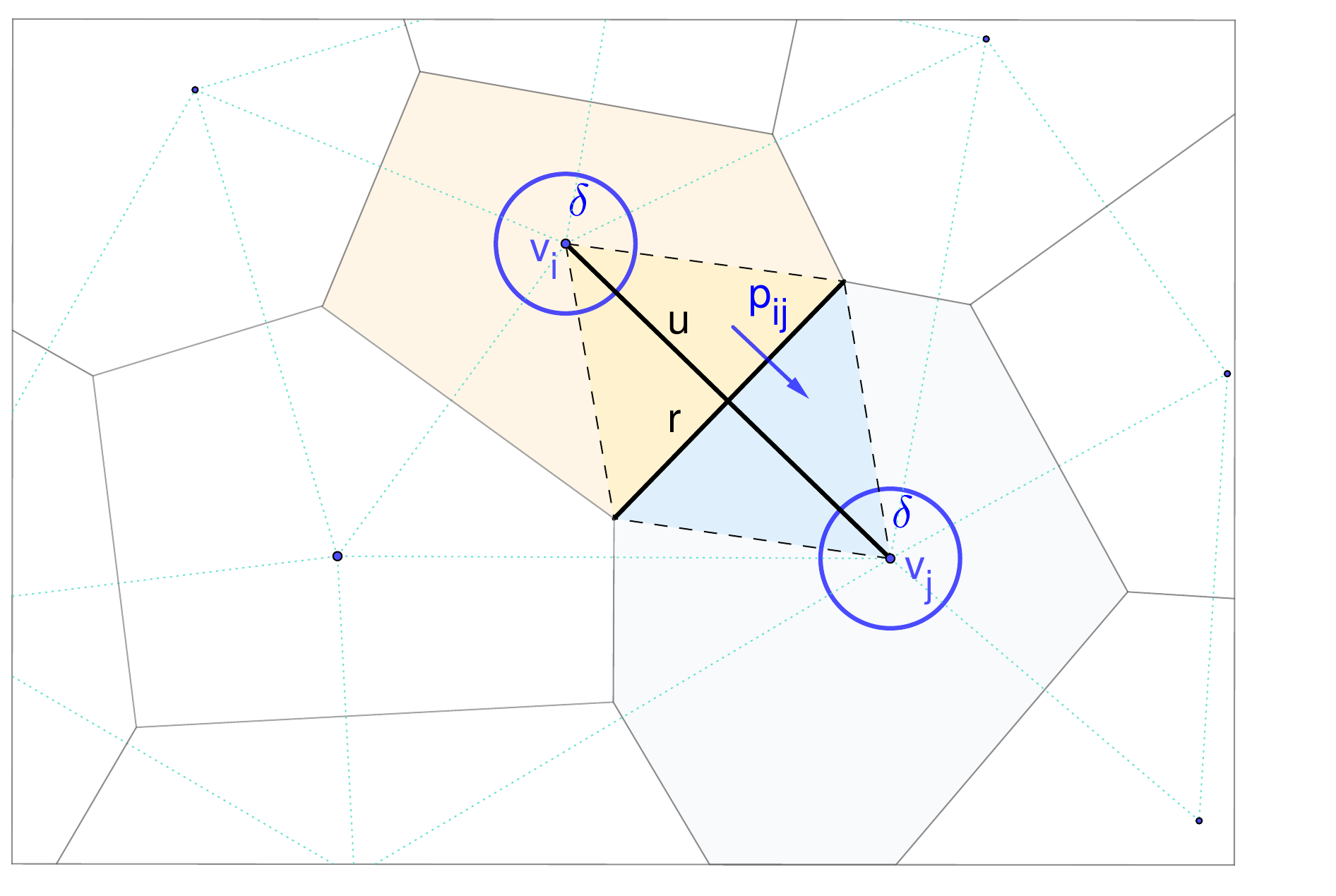}
\caption{\label{fig:markovcell} Illustration of the definition of the transition probabilities for the Voronoi-Markov chain. Transit frequency from $i$ to $j$ is proportional to the triangular area bounded by transition segment $r$ and the central vertex $v_i$.}
\end{subfigure}\hspace{0.25pc}
\begin{subfigure}{0.49\linewidth}
    \includegraphics[width=8cm]{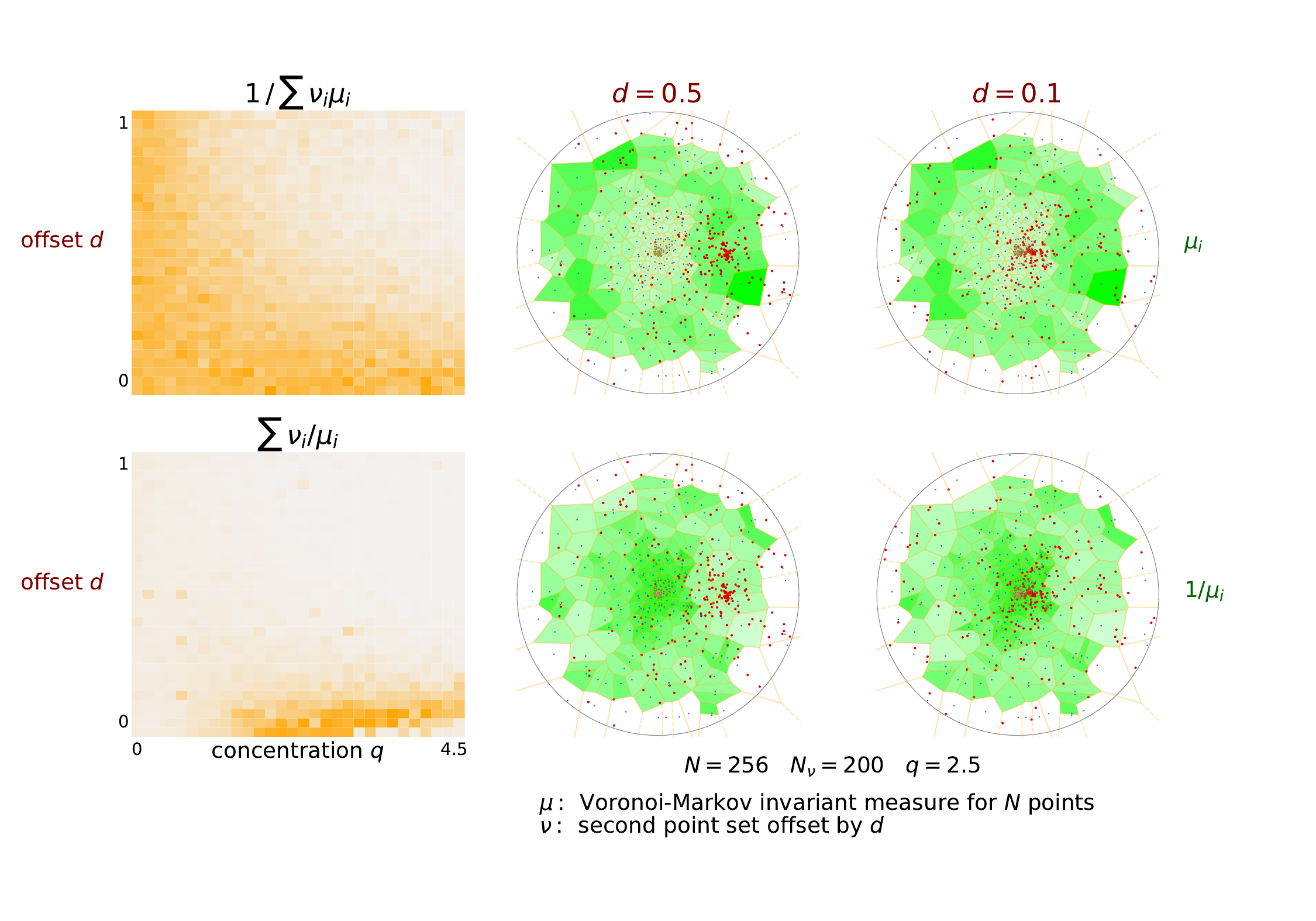}
    \caption{\label{fig:heatmaps}Two concentrated-disc point sets, offset by an amount $d$, are evaluated with the VM concordance and inverted concordance metrics (represented in orange) derived from the invariant measure $\mu$ of the Voronoi-Markov chain of the first set, and the nearest-neighbor assignment binned distribution $\nu$ for the second set. $\mu$ (above) and $1/\mu$ (below) are shown in green, and the points of the set forming $\nu$ are marked red.}
\end{subfigure}
  \begin{subfigure}{0.99\linewidth}
  \includegraphics[width=0.88\textwidth]{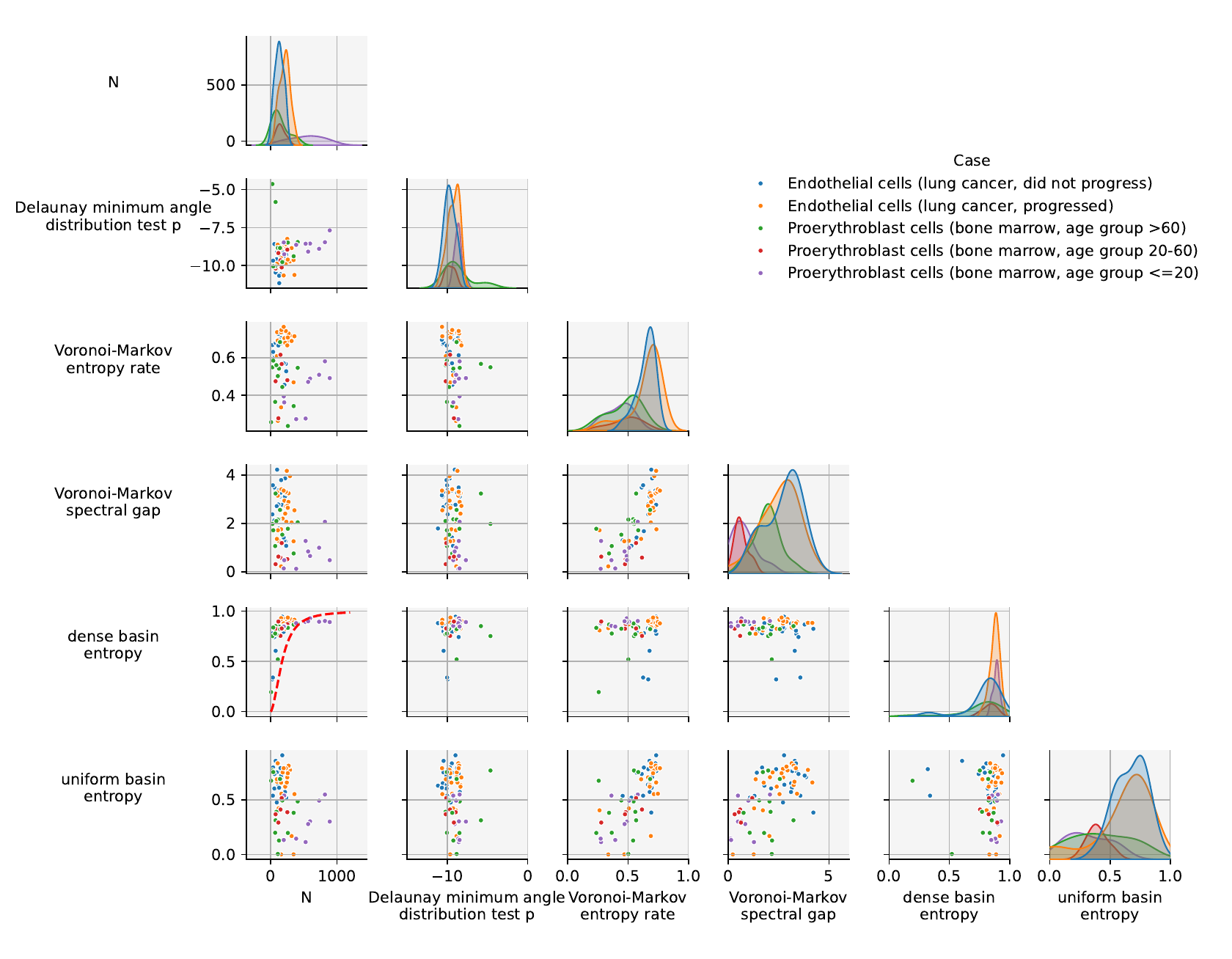}
    \caption{\label{fig:featuregrid} The proposed spatial entropies and other metrics were evaluated on certain cell phenotype subsets in the lung cancer samples of Sorin et al. \cite{sorin2023single} and the bone marrows of Sarachakov et al. \cite{sarachakov2023spatial}. Lung cancer cohorts were downsampled to 50 samples for legibility. Shown are the KDE plots on the diagonal, and the joint distribution of each pair of metrics. The number $N$ of cells is also included, for comparison as a potential confounder. Only normalized variants of the metrics are used (and the logarithm applied to the Delaunay test $p$ values). Most metrics are generally independent of $N$, except for the dense basin entropy, which exhibits a lower bound. The heuristic lower bound of Proposition \ref{idealHdb} is shown in dashed red with parameter $A/a=50000$. The Markov entropy rate and spectral gap show clear correlation. Several of the joint distribution plots indicate strong independence, meaning that they measure different aspects, which is somewhat surprising given that each of the included metrics was a reasonable predictor of concentration in the concentrated disc model. The spectral gap appears to be effective at discriminating proerythroblast cell sets between young and old bone marrows.}
  \end{subfigure}
    \caption{\label{fig:vmoffset} The Voronoi-Markov chain definition, two-set joint analysis, and metrics evaluated and compared for lung cancer and bone marrow histopathology cell sets.}
\label{fig:vm}
\end{figure}

\clearpage

\textbf{Acknowledgments.} This project was supported by NIH R37CA295658, MSK Technology Development Fund, and MSK Cancer Center Support Grant/Core Grant (P30 CA008748).

\printbibliography

\appendix

\section{Reference Null Distributions of Delaunay Triangle Attributes}\label{delaunaynull}

Boots et al. \cite{boots1986using} carried out empirical studies comparing the effectiveness of goodness-of-fit tests of the Delaunay angle-based statistics for discriminating point patterns, and found that the Cramer-von Mises and Anderson-Darling tests were the most effective (the Kolmogorov-Smirnov test, popular perhaps for its theoretical importance, was the least effective). They also compared the specific angle used. Distributional formulas are available for the average $\alpha$, the minimum $\alpha_1$, intermediate $\alpha_2$, and maximum angle $\alpha_3$. Boots et al. considered $\alpha$, $\alpha_1$, and $\alpha_3$, and their findings suggest that they all represent significant metrics with complementary (rather than overlapping) information.

\begin{proposition} Consider as random variables the quantitative attributes of the Delaunay triangulation on Poisson process point sets in the plane. The angle $\alpha$ of a triangle and the smallest angle $\alpha_1$ are distributed as $f(\alpha)d\alpha$ and $f_1(\alpha)d\alpha$ respectively with density functions:
  \begin{align}
    f(\alpha)&=\frac{4}{3\pi}\left[(\pi - \alpha)\cos \alpha + \sin \alpha\right](\sin \alpha) \quad& \alpha\in(0,\pi) \\
    f_1(\alpha)&=\frac{2}{\pi}\left[(\pi - 3\alpha)\sin 2\alpha + \cos 2\alpha -\cos 4\alpha\right] \quad& \alpha\in(0, \pi/3)
  \end{align}
\end{proposition}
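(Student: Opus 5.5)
The plan is to derive both densities as marginals of the joint law of the interior angles of the \emph{typical} Poisson--Delaunay triangle. That joint law is Miles' classical result \cite{miles1970homogeneous}, which I would take as the starting point.

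\textbf{Step 1: Miles' joint density.} Apply the Slivnyak--Mecke formula to triples of Poisson points whose circumdisc contains no other point. Then change variables from the three points to (circumcentre $c$, circumradius $R$, polar angles $u_1,u_2,u_3$ on the circle), using the planar Blaschke--Petkantschin formula
\[ dx_1\,dx_2\,dx_3 = R^3\,\Delta(u_1,u_2,u_3)\,dc\,dR\,du_1\,du_2\,du_3, \]
where $\Delta$ is twice the area of the triangle inscribed in the unit circle. The empty-disc probability $e^{-\lambda\pi R^2}$ depends only on $R$. The density therefore factorises: $R$ has density $\propto R^3e^{-\lambda\pi R^2}$, independent of the shape, and the shape has density $\propto \Delta(u)$. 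By the inscribed-angle theorem the interior angles are $\alpha_i=\tfrac12$ of the arcs, and $\Delta = 4\sin\alpha_1\sin\alpha_2\sin\alpha_3$ (up to a constant). Hence $(\alpha_1,\alpha_2)$, with $\alpha_3=\pi-\alpha_1-\alpha_2$, has density $C\sin\alpha_1\sin\alpha_2\sin(\alpha_1+\alpha_2)$ on $\{\alpha_1,\alpha_2>0,\ \alpha_1+\alpha_2<\pi\}$. This is symmetric in the three angles. I expect this step, specifically the Jacobian bookkeeping and the ergodic identification of the typical triangle, to be the main obstacle; I would lean on the cited treatments in \cite{okabe1999spatial,moller1998review} rather than rederive it fully.

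\textbf{Step 2: normalisation.} Use $\sin\beta\sin(\alpha+\beta)=\tfrac12[\cos\alpha-\cos(\alpha+2\beta)]$, which gives
\[ \int_0^{\pi-\alpha}\sin\beta\sin(\alpha+\beta)\,d\beta=\tfrac12\bigl[(\pi-\alpha)\cos\alpha+\sin\alpha\bigr]. \]
Multiplying by $\sin\alpha$ and integrating over $(0,\pi)$ yields $3\pi/8$, so $C=8/(3\pi)$.

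\textbf{Step 3: marginals.} A uniformly chosen angle $\alpha$ has, by symmetry, the marginal density of $\alpha_1$. Using the integral from Step 2, this is
\[ f(\alpha)=\tfrac{8}{3\pi}\sin\alpha\cdot\tfrac12\bigl[(\pi-\alpha)\cos\alpha+\sin\alpha\bigr], \]
which is the stated formula.

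For the minimum angle, the event $\{\alpha_{\min}\in d\alpha\}$ is the disjoint union, over the three labels, of one angle equal to $\alpha$ and the other two exceeding $\alpha$. This requires $\beta\in(\alpha,\pi-2\alpha)$ and forces $\alpha<\pi/3$, so
\[ f_1(\alpha)=3\cdot\tfrac{8}{3\pi}\sin\alpha\int_\alpha^{\pi-2\alpha}\sin\beta\sin(\alpha+\beta)\,d\beta. \]
The same product-to-sum identity evaluates the integral to $\tfrac12[(\pi-3\alpha)\cos\alpha+\sin3\alpha]$. Then $2\sin\alpha\cos\alpha=\sin2\alpha$ and $2\sin\alpha\sin3\alpha=\cos2\alpha-\cos4\alpha$ give
\[ f_1(\alpha)=\tfrac{2}{\pi}\bigl[(\pi-3\alpha)\sin2\alpha+\cos2\alpha-\cos4\alpha\bigr], \]
as claimed.
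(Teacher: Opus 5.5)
Your proposal is correct, but it does not follow the paper, because the paper gives no derivation. It attributes $f$ to Collins and Miles and $f_1$ to Boots et al.\ (who in turn cite Puri). You instead reduce both formulas to one classical input: Miles' joint angle law $\frac{8}{3\pi}\sin\alpha_1\sin\alpha_2\sin(\alpha_1+\alpha_2)$ for the typical triangle. From it you get $f$ and $f_1$ by elementary marginalisation.

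I checked each computational step and all are right:
the product-to-sum integrals, the normalising constant $3\pi/8$, the range $\beta\in(\alpha,\pi-2\alpha)$ forcing $\alpha<\pi/3$, and the final simplification via $2\sin\alpha\sin 3\alpha=\cos 2\alpha-\cos 4\alpha$. As a consistency check, your $f_1$ integrates to $1$ on $(0,\pi/3)$, matching the appendix CDF evaluated at $\pi/3$.

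Your route buys verification and transparency. The secondhand formula for $f_1$ is confirmed rather than taken on trust, and $f$ and $f_1$ are visibly two marginals of one symmetric law. The same scheme, with a little casework on the integration range, also gives the middle- and maximum-angle densities and the CDFs the paper computes itself. The paper's route buys only brevity.

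The one genuinely nontrivial step is the factorisation of the typical-triangle law into a radial part $\propto R^3e^{-\lambda\pi R^2}$ and a shape part $\propto\sin\alpha_1\sin\alpha_2\sin\alpha_3$, via Slivnyak--Mecke and Blaschke--Petkantschin. Both treatments ultimately outsource it, and your sketch of it is sound. That includes identifying a uniformly chosen angle with the marginal of $\alpha_1$ by exchangeability, which is legitimate since every triangle has exactly three angles.
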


$f(\alpha)$ is the original Collins-Miles formula \cite{collins1968geometrical, miles1970homogeneous}. $f_1(\alpha)$ is reported in \cite{boots1986using}, citing \cite{puri1978analysis}. There are also formulas for $f_{2}(\alpha)$ and $f_{3}(\alpha)$ (slightly more complicated), and the joint distribution of two angles $(\alpha, \beta)$ of such triangles. Many other properties are also well understood like the circumradius and side length distribution. Since $(\alpha, \beta)$ forms a complete set of shape parameters for a triangle up to scale, an effective test devised against the known distribution of $(\alpha, \beta)$ would be particularly useful for point pattern analysis, as it would comprise an assessment of the full shape content in the Delaunay triangles of empirical point sets.

Integration of $f(\alpha)$ and $f_1(\alpha)$ provides the cumulative distribution function for $\alpha$ as required by goodness-of-fit tests:
\begin{align}
CDF(\alpha)&=\int_{0}^{\alpha}f(x)dx = \tfrac{1}{6\pi}\left[ 4\alpha + 2\pi -3\sin 2\alpha - 2(\pi - \alpha)\cos 2\alpha \right] \\
  \begin{split}
  CDF(\alpha_1)&=\int_{0}^{\alpha_1}f_1(x)dx \\
  &= \tfrac{1}{\pi}\left[ 3\alpha_1 \cos 2\alpha_1 -\sin \alpha_1 ( 2 \cos \alpha_1 + \cos 3\alpha_1 -2\pi \sin\alpha_1 ) \right]
  \end{split}
\end{align}

The Cramer-von Mises and Kolmogorov-Smirnov tests for custom CDFs are implemented in {\tt scipy}, making the ``Delaunay shape statistics'' tests for $\alpha$ and $\alpha_1$ based on these CDFs readily computable.

Although the derivation is not difficult, we were not able to find formulas for these CDFs in the literature.

\section{3-point model}\label{threepoint}
The simplest non-trivial case of the Voronoi-Markov chain has 3 states and 3 basin generating points, provided we make a minor modification to the construction. This case will never be directly useful for applications, but it may provide hints about the mathematical character of the general case, and it has the virtue of tractability.

The basins for a 3 point set are all infinite, so we restrict the areas considered in the construction to the portions lying within the single Delaunay triangle. Practically this means that the area expression $ru/4$ for neighbors at a distance $u$ will be replaced by $hu/4$, where $h$ is the portion of the perpendicular bisector of segment $u$ which lies in the triangle. A generic depiction of one Delaunay triangle in a Voronoi cell structure is shown in \Cref{fig:trianglecell}, to illustrate these areas.

Let $u_1$, $u_2$, $u_3$ be the interpoint distances (Delaunay triangle side lengths), and $a_i=h_i u_i/4$. We first compute the dense basin distribution, providing a direct formula for the dense basin entropy in this special case.

\begin{proposition} For the 3-point model:
  \begin{enumerate}
    \item{The uniform basin distribution is $\frac{1}{A}\cdot\left( a_2 + a_3, a_1 + a_3, a_1 + a_2 \right)$, where $a_i=h_i u_i / 4$, $h_i=\sqrt{R^{2}-u_i^2}$ for the Delaunay triangle circumradius $R$, and $A=2(a_1 + a_2 + a_3)$ is the triangle area.
    }
    \item{The dense basin distribution is
      \begin{align}
        \frac{1}{p_2 + 3 h_2}\cdot \left((a_1 + a_2)(a_1 + a_3), (a_2+a_1)(a_2+a_3), (a_3+a_1)(a_3+a_2) \right)
      \end{align}
      where $p_2=a_1^2 + a_2^2 + a_3^2$ denotes the power sum and $h_2=a_1a_2 + a_1a_3 + a_2a_3$ denotes the homogeneous symmetric polynomial.
    }
  \end{enumerate}
\end{proposition}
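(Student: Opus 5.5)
The plan is to prove both parts by direct elementary geometry on the single Delaunay triangle $v_1v_2v_3$, followed by an algebraic normalization. I will label the sides so that $u_i$ is the side opposite $v_i$. The three Voronoi ridges are pieces of the perpendicular bisectors of the sides, and they meet at the circumcenter $O$. Hence the part of the basin of $v_j$ lying inside the triangle is the quadrilateral with vertices $v_j$, $O$, and the midpoints of the two sides incident to $v_j$.

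\textbf{Part 1.} First I cut each such quadrilateral along the segment $v_jO$ into two small triangles. Each small triangle has vertices $v_j$, the midpoint $m_i$ of an incident side $u_i$, and $O$. It has a right angle at $m_i$, legs $u_i/2$ and $|m_iO|$, and hence area $\tfrac12\cdot\tfrac{u_i}{2}\cdot h_i = h_iu_i/4 = a_i$. Here $h_i=|m_iO|$ is exactly the portion of the bisector inside the triangle, and Pythagoras in the right triangle $v_j m_i O$ gives $h_i$ in terms of $R$ and $u_i$. I will check this against the normalization of $h_i$ written in the statement, which should read $\sqrt{R^2-u_i^2/4}$.

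It follows that the basin of $v_1$ has area $a_2+a_3$, and cyclically for the others. Every side contributes its two half-triangles, one to each endpoint, so the total is $A=2(a_1+a_2+a_3)$, which is the triangle area. Dividing by $A$ gives the uniform basin distribution.

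\textbf{Part 2.} Write $A_1=a_2+a_3$, $A_2=a_1+a_3$, $A_3=a_1+a_2$. The dense basin distribution is $p_i=(1/A_i)/\sum_k(1/A_k)$. Multiplying numerator and denominator by $A_1A_2A_3$ gives $p_i = \prod_{j\neq i}A_j \big/ \sum_{k}\prod_{j\neq k}A_j$. The numerators are $(a_1+a_2)(a_1+a_3)$, $(a_2+a_1)(a_2+a_3)$, $(a_3+a_1)(a_3+a_2)$, as claimed. For the denominator I expand each product, for instance $(a_1+a_2)(a_1+a_3)=a_1^2+(a_1a_2+a_1a_3+a_2a_3)=a_1^2+h_2$. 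Summing the three products then gives $p_2+3h_2$.

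\textbf{Main obstacle.} The only delicate point is geometric: the decomposition of the basins assumes the circumcenter lies inside the triangle, i.e.\ the triangle is acute. For an obtuse triangle, $O$ lies outside, and the bisector of the longest side does not meet the triangle's interior in the way described. I would handle this either by restricting the statement to acute triangles, or by interpreting $h_i$ as a signed distance so that $a_i$ is a signed area. In the signed version the same additive decomposition holds, and the algebra of Part 2 is unchanged. The algebra itself is routine.
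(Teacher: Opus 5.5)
Your argument for the acute case is correct, and it is the verification the paper's setup relies on; the paper states this proposition without proof. Each $a_i=h_iu_i/4$ is the right triangle $v_jm_iO$, each basin is two such triangles, and Part 2 is the routine clearing of denominators you describe. Your correction $h_i=\sqrt{R^2-u_i^2/4}$ is also right: as printed, $R^2-u_i^2$ is already negative for the equilateral triangle, where $u_i=\sqrt{3}R$.

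What does not work is the signed-area fallback for obtuse triangles. With signed areas the six right triangles $v_jm_iO$ still sum to the triangle's area, so $A=2(a_1+a_2+a_3)$ survives. But the signed quadrilateral $v_1m_3Om_2$ is no longer the part of the basin of $v_1$ inside the triangle.

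Take the obtuse angle at $v_3$. Then $m_3$ is closer to $v_3$ than to $v_1,v_2$, so the ridge between $v_1$ and $v_2$ is the ray from $O$ directed away from $m_3$, and it never meets the triangle. The ridge between $v_1$ and $v_3$ runs from $O$ through a point $P$ of side $v_1v_2$, strictly between $v_1$ and $m_3$, and on to $m_2$. The basin of $v_1$ inside the triangle is therefore the triangle $v_1m_2P$. Its area exceeds the signed value $a_2+a_3$ by $\operatorname{area}(v_1m_3O)-\operatorname{area}(v_1PO)>0$, since the two triangles share apex $O$ and $v_1P$ is a proper part of $v_1m_3$.

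Concretely, $a_i=\tfrac{R^2}{4}\sin 2\alpha_i$ with $\alpha_i$ the angle at $v_i$. For angles $(10^\circ,20^\circ,150^\circ)$ the signed entry $a_2+a_3$ is negative, so the formula does not even define a distribution, and Part 2 inherits the defect.

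So only your first remedy is valid: the proposition holds for non-obtuse triangles, with the right-angled case the degenerate one where $h=0$ on the hypotenuse. Under the paper's own definition of $h_i$ as the portion of the ridge inside the triangle, the statement is false for obtuse triangles. There $a_3=0$, and $A_1=a_2$, $A_2=a_1$ still hold, with $h_2=|Pm_2|$ rather than $\sqrt{R^2-u_2^2/4}$. However, the basin of $v_3$ has area $a_1+a_2+\operatorname{area}(v_3PQ)$, where $Q$ is the point where the $v_2v_3$ ridge crosses $v_1v_2$. Hence $2(a_1+a_2+a_3)$ is strictly smaller than the triangle's area.
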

The Voronoi-Markov chain $(p_{ij})$ has the form of the right-stochastic matrix
\begin{align}
 \begin{bmatrix} a/s_1 & a_3/s_1 & a_2/s_1 \\ a_3/s_2 & a/s_2 & a_1/s_2 \\ a_2/s_3 & a_1/s_3 & a/s_3 \end{bmatrix}
\end{align}
where $s_1=(a_2+a_3+a)$, $s_2=(a_1+a_3+a)$, $s_3=(a_1 + a_2 + a)$. Recall that $a$ is a fixed constant area.

The invariant measure and the spectral gap under some conditions can be determined by direct methods.
\begin{proposition}For the 3-point Voronoi-Markov chain, assume all $a_i$ are distinct.
  \begin{enumerate}
    \item{\label{invm}The invariant measure of $p$ is proportional to $(s_1, s_2, s_3)$.}
    \item{\label{specM}The eigenvalue spectrum of $p$ equals $\{1, \lambda_2, \lambda_3\}$ with either $\lambda_3=\overline{\lambda_2}$ or both $\lambda_2,\lambda_3\in \mathbb{R}$. $\lambda_2$ and $\lambda_3$ are the solutions of
      \begin{align}
        0 = x^2 +\left(1-a\left(\frac{1}{s_1}+\frac{1}{s_2}+\frac{1}{s_3}\right)\right) x + \frac{2 a_1 a_2 a_3 - (a_1^2 + a_2^2 + a_3^2) a + a^3}{s_1 s_2 s_3}
      \end{align}
    }
    \item{\label{l2}In the case $(\lambda_2, \lambda_3)=(\lambda_2, \overline{\lambda_2})$, the (absolute) spectral gap of $M$ is
      \begin{align}
        g = 1-{\frac{\sqrt{2 a_1a_2a_3 -(a_1^2 + a_2^2 + a_3^2) a + a^3}}{\sqrt{s_1 s_2 s_3}}}
      \end{align}
    }
  \end{enumerate}
\end{proposition}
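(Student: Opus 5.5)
The plan is to exploit the fact that the $3\times 3$ matrix is a diagonal rescaling of a symmetric matrix. Write $S=\begin{bmatrix} a & a_3 & a_2\\ a_3 & a & a_1\\ a_2 & a_1 & a\end{bmatrix}$ and $D=\operatorname{diag}(s_1,s_2,s_3)$. Since $s_i$ is exactly the $i$-th row sum of $S$, the Voronoi-Markov chain is $P=D^{-1}S$. All three parts then reduce to elementary linear algebra on $S$.

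\emph{Part (\ref{invm}).} We check directly that $\mu=(s_1,s_2,s_3)$ is a left eigenvector with eigenvalue $1$. We have $\mu P=\mathbf{1}^{T}DD^{-1}S=\mathbf{1}^{T}S$, which is the vector of column sums of $S$. Because $S$ is symmetric, the column sums equal the row sums, which are the $s_j$. Hence $\mu P=\mu$, and normalizing by $s_1+s_2+s_3$ gives the invariant measure. This also shows the chain is reversible: $\mu_i p_{ij}=S_{ij}=\mu_j p_{ji}$.

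\emph{Part (\ref{specM}).} Since $P\mathbf{1}=\mathbf{1}$, $1$ is an eigenvalue. The other two eigenvalues are determined by their sum and product. For the sum, $1+\lambda_2+\lambda_3=\operatorname{tr}P=a(1/s_1+1/s_2+1/s_3)$, so $\lambda_2+\lambda_3=a\sum 1/s_i-1$. For the product, $\lambda_2\lambda_3=\det P=\det S/(s_1s_2s_3)$. Expanding the $3\times3$ determinant gives
\begin{align}
\det S = a^3+2a_1a_2a_3-a(a_1^2+a_2^2+a_3^2).
\end{align}
Substituting these into $x^2-(\lambda_2+\lambda_3)x+\lambda_2\lambda_3=0$ yields exactly the displayed quadratic. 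The dichotomy "complex conjugate pair or both real" holds because the coefficients are real. In fact I would remark that $P=D^{-1/2}\bigl(D^{-1/2}SD^{-1/2}\bigr)D^{1/2}$ is similar to a real symmetric matrix, so $\lambda_2,\lambda_3$ are always real. Reversibility already implies this.

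\emph{Part (\ref{l2}).} If $\lambda_3=\overline{\lambda_2}$, then $|\lambda_2|^2=\lambda_2\overline{\lambda_2}=\det S/(s_1s_2s_3)$, and this quantity is nonnegative. Both nonunit eigenvalues have the same modulus $\sqrt{\det S}/\sqrt{s_1s_2s_3}$, so the absolute spectral gap $1-|\lambda_2|$ is the stated $g$. The symmetrization argument shows this case can occur only degenerately, namely when $\lambda_2=\lambda_3$ is real. I would state that caveat, and note that in the generic real case the gap is $1-\max(|\lambda_2|,|\lambda_3|)$, read off from the quadratic formula.

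The only step requiring care is the determinant expansion together with the sign conventions in the quadratic. The hypothesis that the $a_i$ are distinct is not actually needed for these computations.
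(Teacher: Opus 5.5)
Your proof is correct and is essentially the paper's argument: direct verification that $(s_1,s_2,s_3)$ is left-invariant (your factorization $P=D^{-1}S$ with $S$ symmetric just packages the paper's row-by-row check), the trace and determinant of $P$ to obtain the quadratic, and the constant term read as $|\lambda_2|^2$ in part (3). Your extra observation that $P$ is similar to the symmetric matrix $D^{-1/2}SD^{-1/2}$, so $\lambda_2,\lambda_3$ are always real, is correct and sharper than the paper. Moreover, with the $a_i$ distinct even a real double root is excluded: it would force $D^{-1/2}SD^{-1/2}=\lambda I+(1-\lambda)ww^{T}$ with $w\propto(\sqrt{s_1},\sqrt{s_2},\sqrt{s_3})$, which makes the three distinct $s_i$ roots of the nonzero polynomial $(1-\lambda)x^2/(s_1+s_2+s_3)+\lambda x-a$, so the case in part (3) is in fact vacuous.
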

\emph{Proof}. (\ref{invm}) This can be verified by direct computation, for example \begin{align}
  s_1(a/s_1)+s_2(a_3/s_2)+s_3(a_2/s_3)=a+a_3+a_2=s_1
\end{align}
(\ref{specM}) This formula can be derived without computing the full characteristic polynomial, by observing that since $1$ is an eigenvalue of $p$, the trace and determinant of $p$ are 1 plus the sum of the remaining eigenvalues and the product of these values, respectively. The two cases are of course distinguished by the sign of the discriminant of the quadratic equation.

(\ref{l2}) In this case, the constant term of the quadratic polynomial in (\ref{specM}) is the product $\lambda_2\overline{\lambda_2}=|\lambda_2|^2$. Then $g=1-|\lambda_2|$. \qedsymbol{}

\begin{figure}[h]
  \centering
  \includegraphics[width=0.98\textwidth]{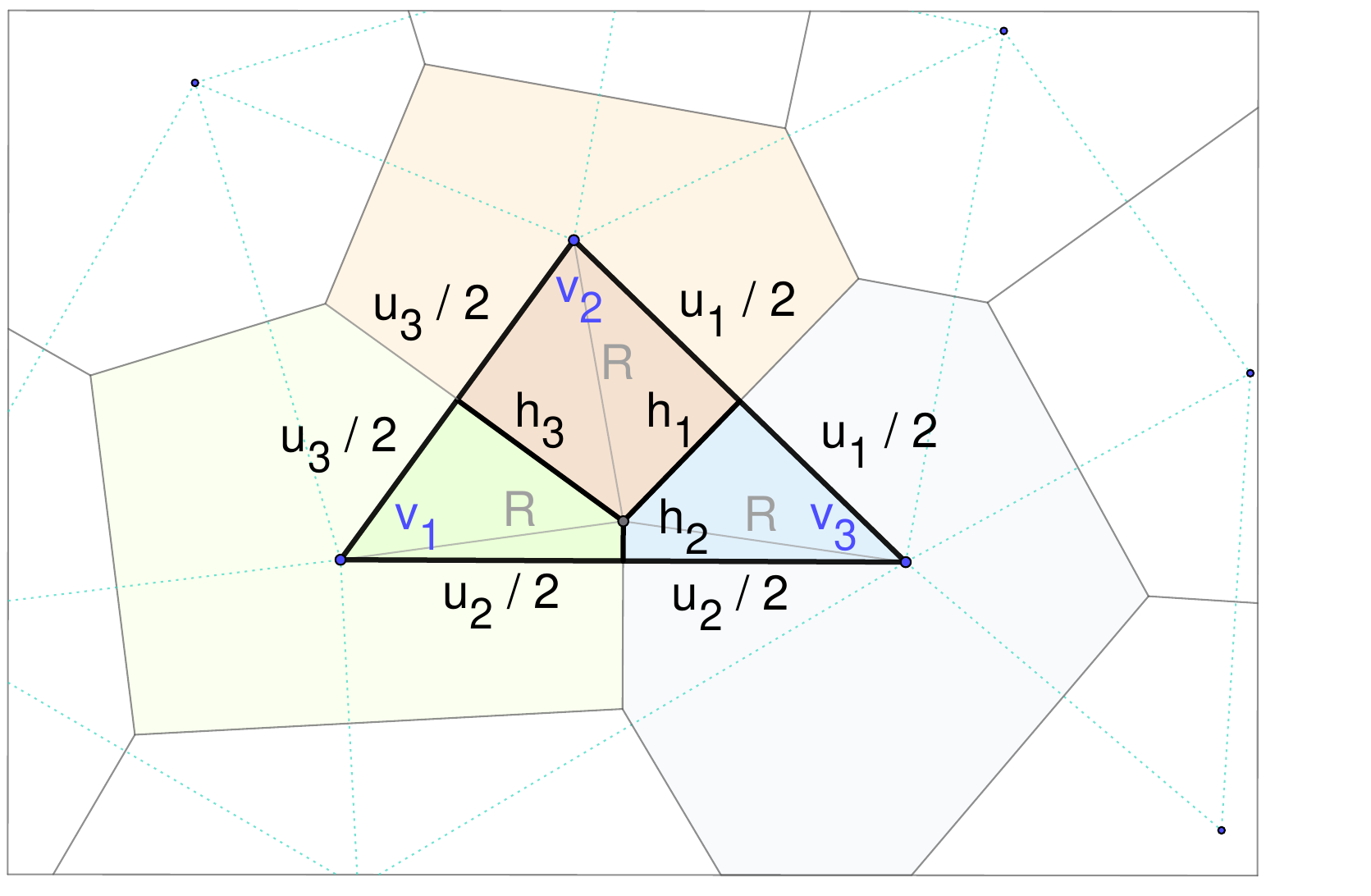}
  \caption{\label{fig:trianglecell}Illustration of the 3-point model.}
\end{figure}

\section{Preliminary investigation of higher-order eigenfunctions of $p^{VM}$} \label{eigensec}

Plots of the first few eigenfunctions of the Voronoi-Markov chain for the concentrated disc are shown in \Cref{fig:eigens}.

\begin{figure}[h]
  \centering
  \includegraphics[width=0.98\textwidth]{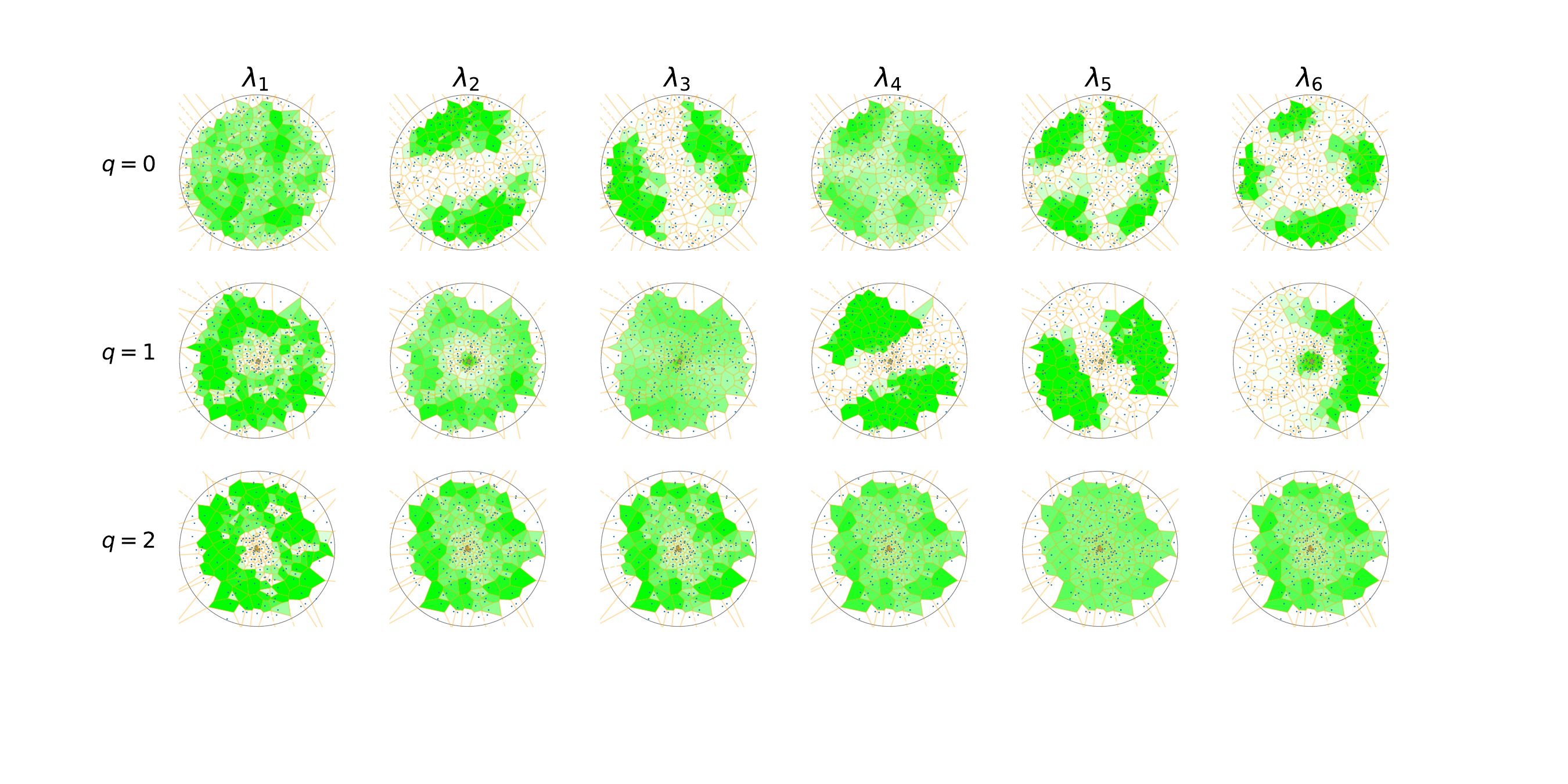}
  \caption{\label{fig:eigens}For the concentrated disc model at 3 values of the concentration $q$, the absolute values of the first 6 eigenfunctions of the Voronoi-Markov chain are plotted, with larger values in darker green. Behavior resembling the harmonics of the Laplacian is apparent in case $q=0$, with the resonances decreasing in variation as $q$ increases (and as the spectral gap decreases). Already at $q=1$ the concentration at the origin is beginning to ``interfere'', most visibly in the case $\lambda_6$, and it appears that the $\lambda_2$ and $\lambda_3$ modes from $q=0$ are shifted to $\lambda_4$ and $\lambda_5$. A sigmoid function has been applied to the eigenfunction values in order to make the differences apparent.}
\end{figure}

\end{document}